\definecolor{c1}{RGB}{92,89,191}
\definecolor{c2}{RGB}{191,92,98}
\definecolor{c3}{RGB}{98,191,92}
\newtheorem{lemma}{Lemma}
\newtheorem{remark}{Remark}
\newtheorem{definition}{Definition}
\newtheorem{proof}{Proof}
\definecolor{sorren}{RGB}{0,0,0}
\begin{document}

\preprint{AIP/123-QED}

\title{Symmetry Induced Group Consensus}

\author{Isaac Klickstein}
  \email{iklick@unm.edu}
  \affiliation{Department of Mechanical Engineering, University of New Mexico, Albuquerque, NM 87106, USA} 
  \author{Louis Pecora}
  \email{louis.pecora@nrl.navy.mil}
  \affiliation{U.S. Naval Research Laboratory, Washington, District of Columbia 20375, USA}
\author{Francesco Sorrentino}
  \email{fsorrent@unm.edu}
  \affiliation{Department of Mechanical Engineering, University of New Mexico, Albuquerque, NM 87106, USA} 
 
\date{\today}
\begin{abstract}
  There has been substantial work studying consensus problems for which there is a single common final state, although there are many real-world complex networks for which the complete consensus may be undesirable.
  More recently, the concept of group consensus whereby subsets of nodes are chosen to reach a common final state distinct from others has been developed, but the methods tend to be independent of the underlying network topology.
  Here, an alternative type of group consensus is achieved for which nodes that are \emph{symmetric} achieve a common final state.
  The dynamic behavior may be distinct between nodes that are not symmetric.
  We show how group consensus for heterogeneous linear agents can be achieved via a simple coupling protocol that exploits the topology of the network. We see that  group consensus is possible on both stable and unstable trajectories.
  We observe and characterize the phenomenon of \textit{isolated group consensus}, where one or more clusters may achieve group consensus while the other clusters do not.
\end{abstract}
\maketitle
\begin{quotation}
  Consensus problems are an important topic when designing control protocols for distributed systems where it is desirable for uniform behavior between nodes like power generators in the grid, mobile robots, or autonomous vehicles.
  Sometimes though, complete consensus is not the desired behavior, but rather group consensus whereby some nodes' behaviors will coincide while others do not.
  This problem has been tackled previously using linear matrix inequalities (LMIs) or Lyapunov functions but the result can only guarantee complete consensus.
  We instead present a method derived using the automorphism group of the underlying graph which provides more granular information that splits the dynamics of consensus motion from different types of orthogonal, cluster breaking motion.
\end{quotation}
\section{Introduction}
For the past twenty years, the field of multi-agent network dynamics, and the coordination between said agents, has been investigated by researchers from a vast range of disciplines \cite{ren2005consensus, ren2008consensus, qin2013coordination}.
The applications for a well designed method of coordination are diverse, from vehicle attitudes \cite{ren2008distributed}, opinion dynamics \cite{proskurnikov2016opinion}, sensor networks \cite{he2013sats}, and communication networks \cite{lin2014constrained}.
A consensus protocol is defined as information sharing between agents.
The information shared often takes the form of relative states \cite{ren2008distributed}.
More recently, research has focused on when a network is able to achieve group consensus, i.e., some members of the network reach consensus with each other, but not necessarily with all members of the network.
Group consensus is investigated for undirected and directed networks with and without switching topology in \cite{yu2009group,yu2010group,yu2012group}.
The intra-group coupling is used as the tool to determine whether or not group consensus is achieved in \cite{qin2013group,qin2016group,qin2015Hinfinity}.
Related work on group and cluster synchronization of networks has been carried out in \cite{sorrentino2007network, williams2013experimental, pecora2014cluster, sorrentino2016complete}.
Much of the current research into group consensus \cite{hu2016reverse, hu2016couple, gao2015second, feng2014group} assumes a balanced adjacency graph between groups, that is, the sum of inter-cluster connections sum to zero.
Also, the groups, or clusters, of nodes that reach consensus with each other are defined on the network with a typical requirement being that each group satisfies some structural property, notably without exploiting any inherent graphical properties of the network.
Alternatively, in this paper, the final group consensus is achieved as an emergent property of the network topology.\\
\indent The groups of nodes we focus on are the so called \emph{orbits of the automorphism group} \cite{lauri2016topics}, that is, those nodes which are symmetric within the graph, that is, there are permutations of the nodes that leave the network unchanged.
The symmetries of a graph that describes the underlying topology of a network can be determined from its automorphism group.
This problem has been approached using contraction theory \cite{russo2011symmetries, russo2013convergence} but the method presented therein does not provide insight into the effect of various control gains that may be tuned.
Here we exploit the properties of the automorphism group of the network to define group consensus.
Following our recent work on generating graphs with desired symmetries \cite{klickstein2018generatingb,klickstein2018generating}, generating graphs with desirable symmetry properties can be done easily.
We approach this problem using the block diagonalizing technique presented in \cite{pecora2014cluster} which decouples consensus creating and consensus breaking dynamics.\\
\textcolor{sorren}{\indent Our results are in agreement with previous work which described phase synchronization in networks of coupled nonlinear Kuramoto oscillators \cite{nicosia2013remote}. They  differ from previous work on multi-consensus \cite{monaco2019multi} which considers directed graphs and Laplacian coupling.}\\

\indent The rest of this paper is organized as follows.
Some background material with respect to graph and matrix theory as well as the automorphism group of a graph is presented in Section II.
In Section III we use the block diagonalizing transformation \cite{pecora2014cluster} to determine when group consensus will occur regardless of the stability of the overall system.
We present an extensive example as well to highlight the features of our proposed method, such as the ability to yield isolated group consensus.
We conclude the paper with a summary and some future directions in Section V. 
\section{Preliminaries}
Throughout this paper we use the following standard notation; let $I_N$ denote the identity matrix of dimension $N$, let $O_{N\times M}$ denote the matrix of all zeros of dimension $N \times M$ (for brevity we will denote $O_{N \times N} = O_N$), let $\boldsymbol{1}_N$ denote the vector of all ones of length $N$ and let $\boldsymbol{0}_N$ denote the vector of all zeroes of length $N$.\\
\indent We define a graph $\mathcal{G} = (\mathcal{V}(\mathcal{G}),\mathcal{E}(\mathcal{G}))$ as consisting of two sets: a set of nodes $\mathcal{V} = \mathcal{V}(\mathcal{G}) = \left\{i | i=1,\ldots,N\right\}$ so that $|\mathcal{V}| = N$ and a set of edges $\mathcal{E} = \mathcal{E}(\mathcal{G}) \subseteq \mathcal{V}\times\mathcal{V}$ where $(i,j) \in \mathcal{E}$ if node $j$ receives a signal from node $i$ and $(i,j) \notin \mathcal{E}$ otherwise.
We say the graph is undirected if $(i,j) \in \mathcal{E}$ implies that $(j,i) \in \mathcal{E}$ and the graph is directed otherwise.
The adjacency matrix of the graph $\mathcal{G}$ is a binary matrix $A = \{A_{ij}\} \in \mathbb{R}^{N \times N}$ such that element $A_{ij} = 1$ if $(j,i) \in \mathcal{E}$ and $A_{ij} = 0$ if $(j,i) \notin \mathcal{E}(\mathcal{G})$.
Clearly, if $\mathcal{G}$ is undirected then $A$ is symmetric and if $\mathcal{G}$ is directed then $A$ is may be non-symmetric.
The degree of a node, denoted $d_i$, is the number of neighbors that the node $i$ has. 
\indent We now place a definition on what we mean by an orbit \cite{lauri2016topics}.
Define a permutation of the graph, $\pi(\mathcal{G}) = \mathcal{G}'$, where $\mathcal{V}(\mathcal{G}) = \mathcal{V}(\mathcal{G'})$, i.e., a permutation does not remove or introduce nodes, and if $(i,j) \in \mathcal{E}(\mathcal{G})$ then $(\pi(i),\pi(j)) \in \mathcal{E}(\mathcal{G}')$.
The graphs $\mathcal{G}$ and $\mathcal{G}'$ are said to be isomorphic if there exists a permutation with the above properties.
If $\mathcal{G} = \mathcal{G}'$ then the permutation $\pi$ is said to be an automorphism.
If $\pi$ is an automorphism and if $(i,j) \in \mathcal{E}(\mathcal{G})$  then $(\pi(i),\pi(j)) \in \mathcal{E}(\mathcal{G})$.
The same identity holds if $(i,j) \notin \mathcal{E}(\mathcal{G})$.
The set of automorphisms forms a permutation group acting on the nodes of a graph $\mathcal{G}$ which we denote as $(\text{Aut}(\mathcal{G}),\mathcal{V}(\mathcal{G}))$, or simply $\text{Aut}(\mathcal{G})$.
The set of all permutations in the autormorphism group will only permute certain subsets of nodes among each other.
These subsets of nodes are defined as the orbits (or equivalently `clusters') of the automorphism group.
There exists a permutation in the automorphism group that will permute any node in an orbit with any other node in the same orbit.
There also may exist trivial orbits, i.e., an orbit $k$ with a population of one.
\begin{lemma}\label{lem:com}
  Define the matrix $P$ as an $N \times N$ permutation matrix associated with a permutation $\pi \in \text{Aut}(\mathcal{G})$ (from now on, for simplicity, we will say $P \in \text{Aut}(\mathcal{G})$).
  If $A$ is the adjacency matrix of graph $\mathcal{G}$, then the permutation matrix $P$ commutes with $A$, i.e., $AP = PA$.
\end{lemma}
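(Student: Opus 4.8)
The plan is to reduce the claim to the entrywise characterization of a graph automorphism and then compare the two matrix products index by index. First I would pin down a convention for the permutation matrix attached to $\pi$: write $P = \{P_{ij}\}$ with $P_{ij} = 1$ precisely when $i = \pi(j)$, so that $P e_j = e_{\pi(j)}$ and $P^{-1} = P^{\top}$. Because $P$ has exactly one nonzero entry in each row and in each column, any product with $P$ collapses to a relabeling of an index, which is the only mechanism the proof needs.

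Next I would translate the definition of an automorphism into a statement about $A$. By definition $\pi \in \text{Aut}(\mathcal{G})$ means $(i,j)\in\mathcal{E}(\mathcal{G})$ if and only if $(\pi(i),\pi(j))\in\mathcal{E}(\mathcal{G})$; combined with the paper's convention that $A_{ij} = 1$ iff $(j,i)\in\mathcal{E}$, this is exactly the identity $A_{ij} = A_{\pi(i)\pi(j)}$ for every ordered pair $(i,j)$ (the ``only if'' and ``if'' directions give the two inequalities between the $0/1$ values). This is the single structural fact the lemma rests on.

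Then I would compute the two products directly using the sifting property of $P$: $(AP)_{ij} = \sum_k A_{ik} P_{kj} = A_{i,\pi(j)}$ and $(PA)_{ij} = \sum_k P_{ik} A_{kj} = A_{\pi^{-1}(i),j}$. Substituting $i \mapsto \pi^{-1}(i)$ in the automorphism identity $A_{ij} = A_{\pi(i)\pi(j)}$ yields $A_{\pi^{-1}(i),j} = A_{i,\pi(j)}$, so $(PA)_{ij} = (AP)_{ij}$ for all $i,j$, i.e.\ $PA = AP$. Equivalently — and perhaps cleaner to state — I would observe that $PAP^{\top}$ is precisely the adjacency matrix of the permuted graph $\pi(\mathcal{G})$; since $\pi$ is an automorphism, $\pi(\mathcal{G}) = \mathcal{G}$, hence $PAP^{\top} = A$, and right-multiplying by $P$ (using $P^{\top}P = I_N$) gives $AP = PA$.

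I do not expect a genuine obstacle here: the statement is a bookkeeping exercise. The only point that requires care is keeping the index/transpose conventions consistent with the (slightly transposed) definition of $A$ and with the chosen orientation of $P$ relative to $\pi$; for that reason I would state the convention for $P$ explicitly at the outset so that the two displayed computations above are unambiguous.
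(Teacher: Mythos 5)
Your proof is correct and follows essentially the same route as the paper: an entrywise comparison of $(AP)_{ij}$ and $(PA)_{ij}$ using the fact that an automorphism preserves adjacency, $A_{ij} = A_{\pi(i)\pi(j)}$. Your version is somewhat more careful than the paper's (you fix the convention for $P$ explicitly and handle both the edge and non-edge cases, whereas the paper only spells out the edge case), and your $PAP^{\top}=A$ reformulation is just the same computation repackaged, not a different argument.
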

\begin{proof}
  Consider a permutation such that $\pi(v_i) = v_k$ and $\pi(v_j) = v_{\ell}$.
  Using the fact that $P$, the matrix representation of $\pi$, is binary and orthonormal, then $(PA)_{i\ell} = A_{k\ell}$ and $(AP)_{i\ell} = A_{ij}$.
  By definition of a symmetry, if $(v_i,v_j) \in \mathcal{E}$ then $(\pi(v_i),\pi(v_j)) = (v_k,v_{\ell}) \in \mathcal{E}$ as well and so $A_{ij} = A_{k\ell}$ which proves $AP = PA$.
\end{proof}
\indent Assuming there are $q$ orbits, we partition the nodes according to their orbits so that $\mathcal{V}_k$, $k=1,\ldots,q$ consists of the nodes in orbit $k$ and $\bigcup_{k=1}^q \mathcal{V}_k = \mathcal{V}$.
Note that for the standard definition of a partition, $\mathcal{V}_k \neq \emptyset$ and $\mathcal{V}_k \cap \mathcal{V}_l = \emptyset$, $k \neq l$.\\
\begin{remark}
  All nodes in orbit $k$ will have the same number of in-coming edges from each other orbit $l = 1,\ldots,q$, i.e., $\sum_j A_{ij}=Q_{kl}$ for each $i \in {\mathcal V}_k$ and $j \in {\mathcal V}_l$.
\end{remark}
\indent Let $\bar{i}$ denote the orbit in which node $i$ resides.
If $\bar{i} = \bar{j}$ then nodes $i$ and $j$ are in the same orbit.
Also, let $N_k = |\mathcal{V}_k|$ denote the number of nodes in orbit $k$, $\sum_{k=1}^q N_k = N$.
We assume that the nodes are numbered corresponding to the orbits so that all nodes in orbit $1$ are labelled $1,\ldots,N_1$, the nodes in orbit $2$ are labelled $N_1+1,\ldots, N_1+N_2$ and so forth.
The adjacency matrix is block-partitioned so that the $N_{\ell} \times N_k$ block $A_{\ell k}$ denotes the inter-orbit coupling from orbit $k$ to orbit $\ell$ and the $N_k \times N_k$ block $A_{kk}$ denotes the intra-orbit coupling within orbit $k$.
Here we assume both intra-orbit couplings and inter-orbit couplings to be undirected.
Unlike most work on group consensus that requires some structural property for the intra-cluster connectivity \cite{yu2012group,qin2016leaderless}, in this paper there may or may not by intra-cluster coupling.
\section{Group consensus in linear networks}

This paper is concerned with networks of agents characterized by linear dynamics.
We consider a very general case for which the agents obey the following set of equations,

\begin{equation} \label{eq:gen}
    \dot{\textbf{z}}_i(t) = F_i \textbf{z}_i(t) + \sum_{j=1}^N A_{ij} H  \textbf{z}_j(t),
\end{equation}
where the vector $\textbf{z}_i(t) \in \mathbb{R}^m$ represents the state of agent $i$ at time $t$.
The $m \times m$ matrices $F_i$ represent the individual dynamics of each agent $i = 1,\ldots,N$ when uncoupled.
The $m \times m$ matrix $H$ represents the coupling between agents.
The $N \times N$ adjacency matrix $A$ represents the graph $\mathcal{G}$.
A special case of such dynamics is the second order consensus described in the subsequent example.

\begin{definition}\label{def:identical}
  We say the agents have \emph{identical individual dynamics} if $F_i=F$, $i=1,...,N$.
  A weaker assumption is that agents have \emph{orbit-identical individual dynamics} if $F_i=F^k$, for all  $i \in {\mathcal V}_k$.
  In what follows we will proceed under the assumption of orbit-identical individual dynamics.
\end{definition}
The global system of equations are written by introducing the $Nm$-dimensional vector $\textbf{z}(t)=[\textbf{z}_1(t)^T, \textbf{z}_2(t)^T, ..., \textbf{z}_N(t)^T   ]^T$ and combining each node's contribution of Eq.\ \eqref{eq:gen},

\begin{equation} \label{eq:genn}
  \dot{\textbf{z}}(t) = \Bigl[ \sum_{k=1}^q J^k \otimes F^k  + A \otimes H \Bigr]  {\textbf{z}}(t),
\end{equation}
where the diagonal matrix $J^k$ has entries $J^k_{ii}=1$ if $\bar{i}=k$, $J^k_{ii}=0$ otherwise, and the symbol $\otimes$ indicates the Kronecker product of matrices.
    
\begin{remark}    
  For the case of identical individual dynamics, Eq. \eqref{eq:genn} becomes $\dot{\textbf{z}}(t)= \Bigl[ I_N \otimes F  + A \otimes H \Bigr]  {\textbf{z}}(t)$.
\end{remark}

\begin{definition}
  Consider orbit-identical individual dynamics as stated in Definition \ref{def:identical}.
  The set of states such that ${\textbf z}_i={\textbf z}_j$  for $\overline{i}=\overline{j}$ define an invariant manifold, which we call the \textbf{group consensus manifold.}
\end{definition}
  
Under the assumption of orbit identical individual dynamics, the dynamics on the group consensus manifold is governed by the \textit{quotient network} dynamics,
\begin{equation} \label{eq:q}
  \dot{\textbf{q}}_k (t)= F^k {\textbf{q}}_k(t) + \sum_{\ell = 1}^q Q_{k\ell} H {\textbf{q}}_{\ell}(t), \quad k = 1,\ldots,q, 
\end{equation}
where ${\textbf{q}}_k (t)$ is now the state of orbit $k=1,...,q$.
The $q \times q$ quotient matrix $Q$ is equal to 
\begin{equation}\label{eq:qnet}
  Q=(E^T E)^{-1} E^T A E = E^\dagger A E,
\end{equation}
where $E$ is the $N \times q$ indicator matrix, i.e., $E_{ij}$ is equal to one if node $i$ is in orbit $j$ and is equal to zero otherwise \cite{schaub2016graph}.

The quotient network describes the evolution of the system on the group consensus manifold.
In the case in which all the agents in the same cluster were given the same initial condition, the quotient network dynamics would provide the exact time evolution of all the network agents.

The quotient network dynamics is stable if the largest eigenvalue of the matrix
\begin{equation}
  \text{diag} \{F^1,\ldots,F^q \} + Q \otimes H
\end{equation}
is negative.
    
\begin{remark}
Under the assumption of orbit identical individual dynamics, the quotient network dynamics also describes the time evolution of an average state for all the nodes in the same orbit \cite{sorrentino2019symmetries},
\begin{equation}\label{eq:average}
    \textbf{q}_k (t)= \frac{1}{N_k} \sum_{i \in {\mathcal V}_k}  \textbf{z}_i (t)
\end{equation}
\end{remark}
\textcolor{sorren}{To see this, assume $F_i=F^k$, for all  $i \in {\mathcal V}_k$. Then, sum Eq.\ \eqref{eq:gen} over $i \in {\mathcal V}_k$ and divide by $N_k$. Recall that $\sum_{j \in {\mathcal V}_l} A_{ij}= Q_{kl}$ for each $i \in {\mathcal V}_k$ and $j \in {\mathcal V}_l$ (Remark 1). Then  Eq.\ \eqref{eq:q} follows, with the definition of $\textbf{q}_k (t)$ given in Eq.\ \eqref{eq:average}.}

As we will see, under appropriate conditions, the set of equations \eqref{eq:gen} and \eqref{eq:genn} will admit group consensus.
A set of agents may converge on group consensus on either a stable, or unstable, or marginally stable trajectory.
It is important to note that we are not concerned whether the entire system is asymptotically stable.
Instead, we focus our attention on the stability of each agent with respect to the group consensus state, where all the nodes in its orbits have reached consensus.

\begin{definition}
  We say that the nodes in orbit $\mathcal{V}_k$ have achieved group consensus if $\lim_{t \rightarrow \infty} \| {\textbf z}_i(t) - {\textbf z}_j(t) \|=0$ for all $i$ and $j$ in $\mathcal{V}_k$.
  As can be seen from this definition, group consensus is possible for either stable, marginally stable, or unstable node dynamics, as long as the trajectories converge to each other.
\end{definition}

\begin{remark}
Let us now assume identical individual dynamics. One may be tempted to diagonalize the matrix $A=V \Lambda V^{-1}$, where the matrix $\Lambda$ has diagonal entries $\lambda_1$, $\lambda_2$, ... $\lambda_N$. 
By pre-multiplying Eq.\ \eqref{eq:genn} by $V^{-1} \otimes I_m$ and introducing the vector $\textbf{e}=[\textbf{e}_1^T, \textbf{e}_2^T,..., \textbf{e}_N^T]^T=V^{-1} \otimes I_m \textbf{z} $, the transformed dynamics appears in $m$-dimensional blocks of the form,
\begin{equation} \label{eq:block}
   \dot{\textbf{e}}_i(t)= [F+ \lambda_i H ] {\textbf{e}}_i(t), \quad i = 1,\ldots,N
\end{equation}
While this exercise may provide some insight into the overall stability of the system, it may not allow one to predict the emergence of group consensus.
Assume for example that some of the blocks $[F+ \lambda_i H ]$ were found to be non-Hurwitz.
Then one could not conclude whether (i) group consensus is not achieved or (ii) group consensus is achieved but on a solution that is either diverging or a limit cycle.
We will see that there is a transformation of the system dynamics, provided by group theory, which is more appropriate in terms of characterizing group consensus.
\end{remark}

From knowledge of the group of symmetries of the  network, we can compute the irreducible representations (IRRs) of the symmetry group of the network.
This defines a transformation $T$ into the so called IRR coordinate system (see \cite{pecora2014cluster}).
{The transformation matrix $T$ is orthogonal. The first $q$ rows of the matrix $T$ are such that $T_{ki}=\sqrt{N_k}^{-1}$ if node $i$ is in cluster $k$ and $T_{ki}=0$ otherwise.
These rows describe motion that is parallel to the consensus manifold.
The remaining rows instead describe motion that is orthogonal to the consensus manifold and thus they describe its stability.}

{
Each one of the rows of the matrix $T$ is associated with a specific cluster, namely a total of $N_1$ rows are associated to cluster $1$, a total of $N_2$ rows are associated to cluster $2$, and so on.
If a row of the matrix $T$ is associated with cluster $k$, it means all the $i$ entries of that row are zero for $i$ not in cluster $\mathcal{V}_k$.}

\textcolor{sorren}{By defining the transformed state ${\tilde{\textbf{z}}}(t)=(T \otimes I_m)  \textbf{z}(t)$,  Eq.\ \eqref{eq:genn} can be rewritten,}
\begin{equation} \label{eq:genn1}
  \dot{\tilde{ \textbf{z}}}(t)= \Bigl[ \sum_{k=1}^q J^k \otimes F^k  + B \otimes H \Bigr]  {\tilde{\textbf{z}}}(t) = \hat{B} \tilde{\textbf{z}}(t),
\end{equation}
where  $T J^k T^T=J^k$, and $T A T^T$ is equal to the block-diagonal matrix $B$ \cite{pecora2014cluster}. \textcolor{sorren}{To see that $T J^k T^T=J^k$, recall that 
 the matrix $J^k$ is diagonal with ones on the diagonal corresponding to those entries in the $k$ cluster and zeros elsewhere.
  We examine the $(i,j)$ entry of the product $TJ^kT^T$,
  \begin{equation}
    \begin{aligned}
      (TJ^kT^T)_{ij} &= \sum_{\ell = 1}^n T_{i\ell} J_{\ell\ell}^k T_{j\ell}\\
      &= \sum_{\ell = 1}^n \delta_{\bar{\ell} \bar{i}} \delta_{\bar{\ell} \bar{j}} \delta_{\bar{\ell} k} T_{i \ell} J^k_{\ell \ell} T_{j \ell}
    \end{aligned}
  \end{equation}
  where $\bar{\ell}$ returns the cluster index of node $\ell$ and $\delta_{\bar{\ell} \bar{i}}$ returns 1 if node $\ell$ is in the same cluster as node $i$ and 0 otherwise.
  This implies that if either $i$ or $j$ is not in cluster $k$ the sum will be equal to zero.}
  
  \textcolor{sorren}{
  Now, consider just the set of indices $i$ and $j$ such that $\bar{i} = \bar{j} = k$.
  The summation becomes,
  \begin{equation}
    (TJ^kT^T)_{ij} = \sum_{\ell \in \mathcal{V}_k} T_{i \ell} T_{j \ell} J_{\ell \ell}^k = \sum_{\ell \in \mathcal{V}_k} T_{i \ell} T_{j \ell} = \delta_{ij}, \quad \bar{i} = \bar{j} = k
  \end{equation}
  where we used the property $J_{\ell\ell}^k = 1$ if $\bar{\ell} = k$ and that the rows of $T$ are orthonormal.
  It follows that $TJ^kT^T = J^k$.}

We can write the block-diagonal matrix $B$ as a direct sum $\oplus_{s=1}^S I_{d_s} \otimes C_s$ , where $C_s$ is a (generally complex) $p_s \times p_s$ matrix with $p_s$ the multiplicity of the $s$ IRR of the permutation group
representation, $S$ the number of IRRs and $d_s$
the dimension of the $s$ IRR, so that $\sum_{s=1}^S d_s p_s=N$ \cite{pecora2014cluster}. 
The trivial representation ($s=1$), which is associated with the motion in the synchronization manifold has $p_1=q$.
Each one of the remaining representations $s=2,...,S$ is associated with either: (i) an individual cluster or (ii) a set of \textit{intertwined clusters} \cite{pecora2014cluster}.
Accordingly, we say that in each irreducible representation is present either a cluster or a set of clusters.
The stability of each cluster depends on the maximum eigenvalue associated with each IRR in which the cluster is present.

This has important consequences in terms of the transformed linear dynamics $\tilde{z}(t)$.
The first consequence is that the transformed state vector is partitioned into two parts, $\tilde{\textbf{z}}(t) = [\tilde{\textbf{z}}_{para}^T(t), \quad \tilde{\textbf{z}}_{orth}^T(t)]^T$, where $\tilde{\textbf{z}}_{para}(t) \in \mathbb{R}^q$ describes the motion along the group consensus manifold and $\tilde{\textbf{z}}_{orth}(t) \in \mathbb{R}^{N-q}$ describes the motion orthogonal to the group consensus manifold.
The block diagonal matrix $\hat{B}$ decouples the motion along the group consensus manifold and the motion orthogonal to it.
\begin{equation}
  \left[ \begin{array}{c}
    \dot{\tilde{\textbf{z}}}_{para}(t) \\ \dot{\tilde{\textbf{z}}}_{orth}(t)
  \end{array} \right] = \left[ \begin{array}{cc}
    \hat{B}_{para} & O_{q \times N-q}\\
    O_{N-q \times q} & \hat{B}_{orth}
  \end{array} \right] \left[ \begin{array}{c}
    \tilde{\textbf{z}}_{para}(t) \\ \tilde{\textbf{z}}_{orth}(t)
  \end{array} \right]
\end{equation}
We define $\lambda_{para}^{\max}$ to be the maximum real part of the eigenvalues of $\hat{B}_{para}$ and $\lambda_{orth}^{\max}$ to be the maximum real part of the eigenvalues of $\hat{B}_{orth}$.
Stability of the motion along the group consensus manifold is determined by the sign of $\lambda_{para}^{\max}$. 
If the system corresponding to orthogonal motion, $\hat{B}_{orth}$, is Hurwitz, that is $\lambda_{orth}^{\max} < 0$, then the group consensus manifold is stable and any perturbation orthogonal to the group consensus manifold will decay to zero, independent of the behavior along the manifold.
This implies that it is possible for the original system to be marginally stable, or unstable, yet still achieve group consensus.

\begin{lemma}
The two matrices $Q$ and $\hat{B}_{para}$ are similar.
\end{lemma}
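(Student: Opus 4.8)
The plan is to write $\hat{B}_{para}$ out explicitly in terms of $A$, the $F^k$, and $H$, and then exhibit an invertible matrix that conjugates it to the quotient‑network dynamics matrix $\text{diag}\{F^1,\dots,F^q\}+Q\otimes H$ attached to Eq.~\eqref{eq:q} (which is the matrix meant by ``$Q$'' in the statement); equivalently one shows the trivial‑IRR block of $B$ equals a similarity transform of $Q$ and then tensors with $H$. Let $e_k\in\mathbb{R}^q$ denote the $k$‑th standard unit vector, let $E$ be the $N\times q$ indicator matrix of Eq.~\eqref{eq:qnet}, and set $D:=E^TE=\text{diag}(N_1,\dots,N_q)$. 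The first $q$ rows of $T$ — the rows spanning the consensus directions — form the $q\times N$ matrix $T_{para}=D^{-1/2}E^T$, which reproduces the stated entries $T_{ki}=N_k^{-1/2}$ for $i\in\mathcal V_k$ and $0$ otherwise.

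From Eq.~\eqref{eq:genn1}, $\hat B=\sum_{k}(TJ^kT^T)\otimes F^k+(TAT^T)\otimes H$, and since $\hat B$ block‑decouples the parallel and orthogonal subspaces, $\hat B_{para}$ is its leading $qm\times qm$ block, i.e.\ $\hat B_{para}=\sum_{k}(T_{para}J^kT_{para}^T)\otimes F^k+(T_{para}AT_{para}^T)\otimes H$. Running the computation already performed in the excerpt for $TJ^kT^T$ but keeping only the first $q$ rows gives $T_{para}J^kT_{para}^T=e_ke_k^T$, so $\sum_k (T_{para}J^kT_{para}^T)\otimes F^k=\text{diag}\{F^1,\dots,F^q\}$. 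For the coupling block, $Q=D^{-1}E^TAE=E^\dagger AE$ from Eq.~\eqref{eq:qnet} yields $T_{para}AT_{para}^T=D^{-1/2}E^TAED^{-1/2}=D^{1/2}QD^{-1/2}$. Hence $\hat B_{para}=\text{diag}\{F^1,\dots,F^q\}+\bigl(D^{1/2}QD^{-1/2}\bigr)\otimes H$.

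Finally, conjugate by $S:=D^{1/2}\otimes I_m$. Since each $e_ke_k^T$ is an eigenprojector of the diagonal matrix $D^{1/2}$ it is fixed under conjugation by $D^{1/2}$, so $S$ leaves $\text{diag}\{F^1,\dots,F^q\}=\sum_k (e_ke_k^T)\otimes F^k$ unchanged, while $S\bigl(Q\otimes H\bigr)S^{-1}=(D^{1/2}QD^{-1/2})\otimes H$. Therefore $S\bigl(\text{diag}\{F^1,\dots,F^q\}+Q\otimes H\bigr)S^{-1}=\hat B_{para}$, which establishes the similarity.

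The only delicate point — and the substance of the lemma — is the normalization. The rows of $T$ are orthonormal, so $T_{para}$ is the row‑orthonormalized indicator $D^{-1/2}E^T$ rather than the pseudoinverse $E^\dagger=D^{-1}E^T$ that defines $Q$; consequently $T_{para}AT_{para}^T$ is not $Q$ itself but the symmetrically scaled conjugate $D^{1/2}QD^{-1/2}$. One must check that this discrepancy is exactly a similarity and, simultaneously, that the same diagonal conjugation $D^{1/2}\otimes I_m$ does not disturb the block‑diagonal individual‑dynamics term $\text{diag}\{F^1,\dots,F^q\}$ — which holds precisely because $D$ is diagonal, so the argument never has to move the $F^k$ past anything nontrivial.
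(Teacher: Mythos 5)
Your proof is correct and hinges on exactly the same key step as the paper's own one-line argument: identifying the parallel rows of $T$ as $\tilde{T}=(E^TE)^{-1/2}E^T$, so that $\tilde{T}A\tilde{T}^T=(E^TE)^{1/2}Q(E^TE)^{-1/2}$ is a similarity transform of $Q$ by the diagonal matrix $(E^TE)^{1/2}$. The only difference is interpretive: the paper reads the lemma at the level of the bare $q\times q$ matrices $Q=E^{\dagger}AE$ and $\hat{B}_{para}=\tilde{T}A\tilde{T}^T$, whereas you prove the slightly stronger statement for the full $qm\times qm$ blocks including $\text{diag}\{F^1,\ldots,F^q\}$ and the $\otimes H$ coupling, which the same diagonal conjugation $(E^TE)^{1/2}\otimes I_m$ handles because $D=E^TE$ is diagonal.
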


\begin{proof}
 The block $\hat{B}_{para}= \tilde{T} A \tilde{T}^T$, where the $q \times N$ matrix $\tilde{T}$ is composed of the first $q$ rows of the matrix $T$. Moreover, $\tilde{T}=(E^T E)^{-\frac{1}{2}} E^T$. Hence $\hat{B}_{para}=(E^T E)^{\frac{1}{2}} Q (E^T E)^{-\frac{1}{2}}$.
\end{proof}

Another consequence is that based on the block-diagonal structure of the matrix $\hat{B}_{orth}$, the vector $\tilde{\textbf{z}}_{orth}(t)$ may be partitioned into a number of vectors evolving independently of each other (each one corresponding to a non-trivial irreducible representation of the graph automorphism group.)
This implies that for a given graph, certain clusters may achieve \textit{isolated group consensus}, while others may not, as will become apparent from the example that follows. 
This is a significant difference with respect to the current literature on group consensus \cite{yu2012group,qin2016leaderless} where group consensus is deemed to either occur or not occur.

As discussed previously, the block diagonalized system $\hat{B}$ consists of a set of independently evolving systems, each of which can be assigned to a set of clusters.
Let $\hat{B}_k$ be the $k$th block corresponding to the system $\dot{\tilde{\textbf{z}}}_k = \hat{B}_k \tilde{\textbf{z}}_k(t)$.
Also, let $\mathcal{I}_k$ be the set of clusters associated with this block.
Note that $k = 1$ corresponds to the parallel, group consensus, motion so $\hat{B}_1 = \hat{B}_{para}$ and $|\mathcal{I}_k| = q$.
As for the additional, orthogonal blocks we note that each cluster, $\mathcal{V}_{\ell}$, appears in $N_{\ell}-1$ orthogonal blocks.
If one is interested in only ensuring that a specific set of clusters achieve isolated cluster consensus, it is enough to ensure that the set of orthogonal blocks in which the clusters appear are Hurwitz.
This is explored in some detail in the following example.

\begin{figure}
    \centering
    \includegraphics{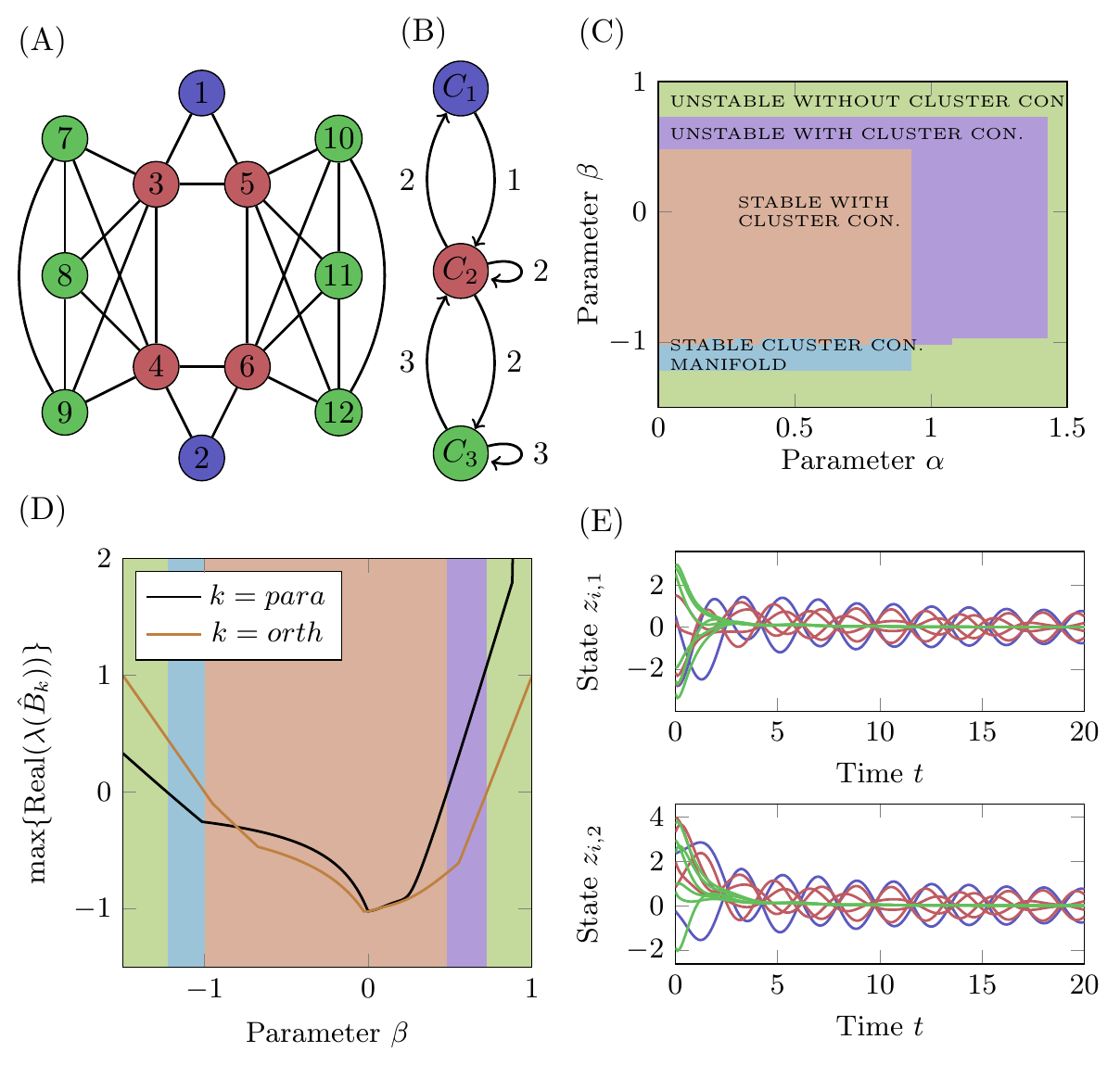}
    \caption{
      Using the $\hat{B}$ matrix to determine group consensus.
      (A) The diagram of the graph of interest.
      This graph has $N = 12$ nodes and $q = 3$ orbits.
      The nodes are colored according to their orbits.
      (B) The quotient network of the graph in (A) partitioned according to its orbits.
      (C) The stability regions in parameter space are colored according to the maximum real part of the eigenvalues of $\hat{B}_{para}$ and $\hat{B}_{orth}$, respectively.
      The green region represents when both the quotient dynamics Eq.\ \eqref{eq:q} is unstable, and group consensus will not occur,  the purple region represents when Eq.\ \eqref{eq:q} is unstable, but group consensus will occur, and the red region represents when Eq.\ \eqref{eq:q} is stable and group consensus will occur.
      The blue region represents a condition in which the quotient dynamics is stable, but group consensus is not. \textcolor{sorren}{In this case, group consensus only occurs if the initial conditions are on the group consensus manifold, then the system  decays to zero.}
      (D) We set $\alpha = 0.2$ and vary $\beta$ from $-1.5$ to $1$. The maximum real part of the eigenvalues of $\hat{B}_{para}$ and $\hat{B}_{orth}$ are plotted, and the stability regions are colored according to the diagram in (C).
      (E) We set $\alpha = 0.2$ and $\beta = -1$ which is on the boundary of the blue and red regions.
      For these values, $\hat{B}_{para}$ is negative definite, but $\hat{B}_{orth}$ is marginally stable.
      The marginally stable block in $\hat{B}_{orth}$ corresponds to the red cluster.}
    \label{fig:diagram}
\end{figure}

Consider the following example with $N = 12$ nodes, $m = 4$ states per nodes, and $q = 3$ orbits in the graph's automorphism group.
A diagram of the graph is shown in Fig. \ref{fig:diagram}(A) and the associated quotient graph is shown in Fig. \ref{fig:diagram}(B).
The graph is drawn such that the symmetries are visually clear, but note that usually the orbits must be found using computational group theory packages, such as Sage \cite{sagemath}.
The nodal dynamics matrices for the nodes in each orbit are,
\begin{equation*}
  \begin{aligned}
    F^1 = \left[ \begin{array}{cc}
      O_2 & I_2 \\ -2 I_2 & -2 I_2
    \end{array} \right], && F^2 = \left[ \begin{array}{cc}
      O_2 & I_2 \\ -4 I_2 & -4 I_2
    \end{array} \right], && F^3 = \left[ \begin{array}{cc}
      O_2 & I_2 \\ -6 I_2 & -6 I_2
    \end{array} \right]
  \end{aligned}
\end{equation*}
The coupling matrix $H$ that describes the connections between nodes is,
\begin{equation*}
  H = \left[ \begin{array}{cc}
    O_2 & O_2 \\ \alpha I_2 & \beta \textbf{1}_2 \textbf{1}_2^T,
  \end{array} \right]
\end{equation*}
where $\alpha$ and $\beta$ are parameters to be chosen to affect the stability of the system.
The adjacency matrix can be determined from the diagram of the graph in Fig. \ref{fig:diagram}(A) and is shown in Fig. \ref{fig:A} where the rows are colored according to the orbit in which each node appears.
\begin{figure}[h!]
\centering
\includegraphics[scale=1]{./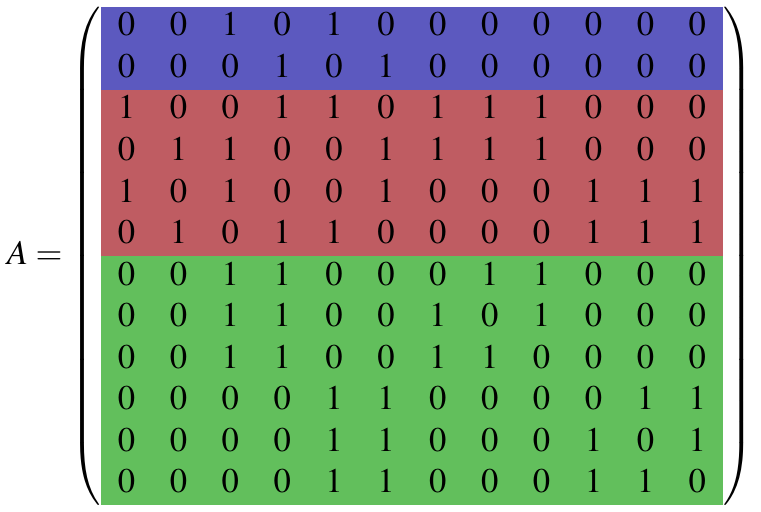}
\caption{The matrix $A$ corresponding to the network in Fig.\ \ref{fig:diagram}. Each row $i$ of the matrix $A$ is colored according to the orbit to which node $i$ and consistently with the coloring in Fig.\ \ref{fig:diagram}.}
\label{fig:A}
\end{figure}

The transformation matrix $T$ that block diagonalizes the adjacency matrix was found using the method discussed and code provided in \cite{pecora2014cluster} and is shown in Fig. \ref{fig:T}.
The rows are colored according to the cluster in which the non-zero entries appear, i.e., the first row is colored green as the non-zero entries appear in columns 7-12, the same indices as the green nodes in Fig. \ref{fig:diagram}(A).
%

\begin{figure}[h!]
\centering
\includegraphics[width=\textwidth]{./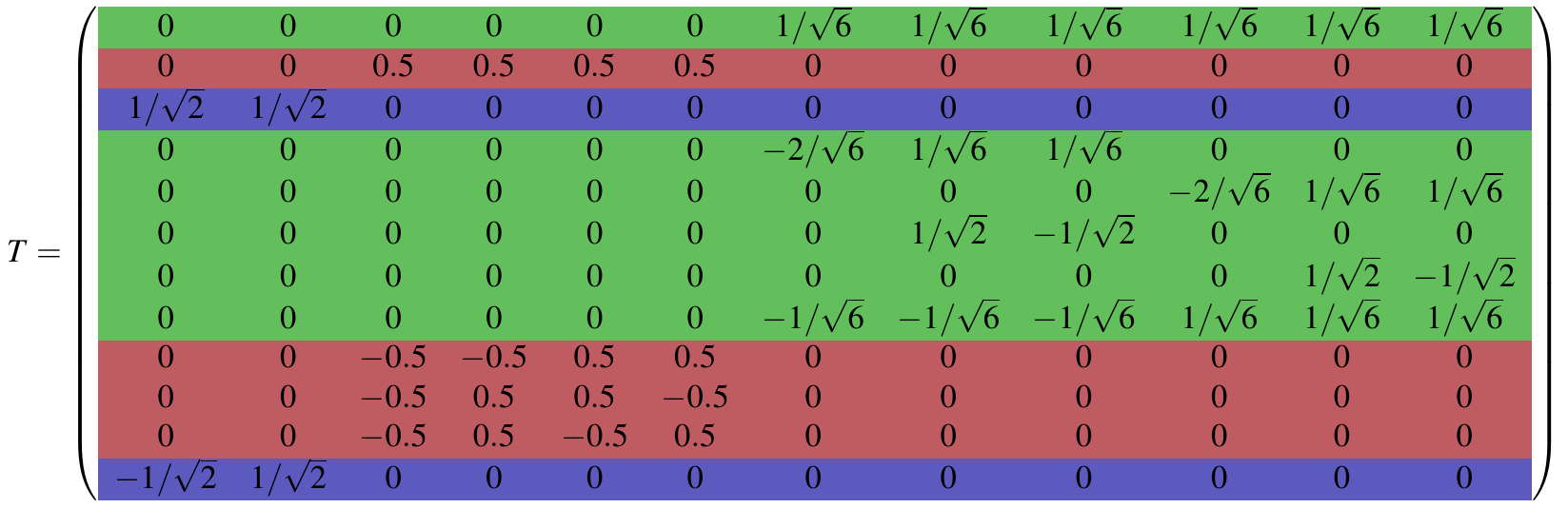}
\caption{The matrix $T$ corresponding to the network in Fig.\ \ref{fig:diagram}. Each row $i$ of the matrix $T$ is colored according to the orbit to which node $i$ belongs and consistently with the coloring in Fig.\ \ref{fig:diagram}.}
\label{fig:T}
\end{figure}


The block diagonalized matrix $B = TAT^T$ is shown in Fig. \ref{fig:B} with the blocks colored according to which clusters' motion they represent, i.e., the same color pattern that appears in the tranformation $T$ shown in Fig. \ref{fig:T}.
\begin{figure}[h!]
  \centering
  \includegraphics[scale=1]{./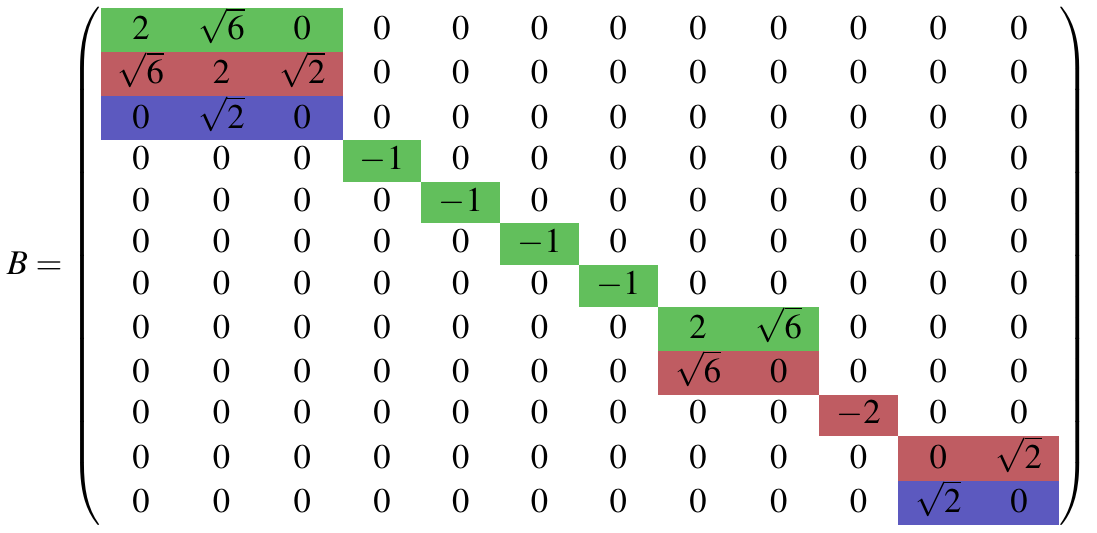}
  \caption{The matrix $B=T A T^T$ corresponding to the network in Fig.\ \ref{fig:diagram}. Each row $i$ of the matrix $T$ is colored according to the rows of the matrix $T$ to which that block corresponds.}
  \label{fig:B}
\end{figure}

As can be seen, $B_{para}$ is a $3 \times 3$ block, corresponding to the trivial irreducible representation ($s=1$);  $B_{orth}$ is the direct sum of $7$ blocks.
From top to bottom, there are four IRRs: the IRR $s=2$ has dimension $d_2=1$ and multiplicity $p_2=4$, the IRR $s=3$ has dimension $d_3=2$ and multiplicity $p_3=1$, the IRR $s=4$ has dimension $d_4=1$ and multiplicity $p_4=1$, and the IRR $s=5$ has dimension $d_5=2$ and multiplicity $p_5=1$.
The green cluster is present in the IRRs 2 and 3, the red cluster in IRRs 3, 4, and 5 and the blue cluster in IRR 5.
Looking at Fig.\ \ref{fig:diagram}, IRR 3 corresponds to a left-right symmetry breaking that affects the red and green clusters but not the blue one, while  IRR 5 corresponds to a top-bottom symmetry breaking that affects the red and blue clusters but not the green one.

Varying the parameters $\alpha$ and $\beta$, we shade the regions of stability in Fig, \ref{fig:diagram}(C) according to the maximum real part of the eigenvalues of $\hat{B}_{para}$ and $\hat{B}_{orth}$.
The green region represents values of $\alpha$ and $\beta$ where both $\lambda_{para}^{\max} > 0$ and $\lambda_{orth}^{\max} > 0$ which means both the system is unstable and group consensus will not occur.
The purple region represents values of $\alpha$ and $\beta$ where $\lambda_{para}^{\max} > 0$ so that the group consensus manifold is unstable, but $\lambda_{orth}^{\max} < 0$ so group consensus will occur.
The red region represents values of $\alpha$ and $\beta$ where both $\lambda_{para}^{\max} < 0$ and $\lambda_{orth}^{\max} < 0$ so that the system is stable and group consensus will occur.
Finally, the blue region represents values of $\alpha$ and $\beta$ where $\lambda_{para}^{\max} < 0$ so the group consensus manifold is stable, but $\lambda_{orth}^{\max} > 0$ so group consensus will not occur if the state is perturbed away from the group consensus manifold.
To examine the behavior of $\lambda_{para}^{\max}$ and $\lambda_{orth}^{\max}$, we set $\alpha = 0.2$ in Fig. \ref{fig:diagram}(D) and vary $\beta$ from $-1.5$ to $1$.
We color the background of the plot according to the regions in Fig. \ref{fig:diagram}(C).
To examine the temporal behavior of the system, in Fig. \ref{fig:diagram}(E) we plot the time traces of $z_{i,1}(t)$ and $z_{i,2}(t)$ for $\alpha = 0.2$ and $\beta = -1$ for each of the nodes, colored according to the nodes' clusters.
This point is at the boundary of the red and blue regions in Fig. \ref{fig:diagram}(D) which corresponds to $\lambda_{para}^{\max} < 0$ and $\lambda_{orth}^{\max} = 0$, i.e., the group consensus manifold is stable but orthogonal perturbations will not damp out.
As $\hat{B}$ is block diagonal, we can find the particular block that is marginally stable, which for this example is the $1 \times 1$ block $-2$ in the tenth row of $B$.
Tracing back to the transformation matrix $T$, we see that the corresponding row represents the perturbation that splits nodes $3$ and $6$ from nodes $4$ and $5$ in cluster $\mathcal{V}_2$.
Nonetheless cluster $\mathcal{V}_3$ still converges, i.e., this is an example of isolated group consensus.

\begin{remark}
  Consider the case of identical individual dynamics.
  Then all the blocks (either orthogonal or parallel) corresponding to an IRR with dimension $d>1$, can be further diagonalized into $d$ $m$-dimensional systems of the form $\dot{\hat{\textbf z}}_i(t)= [F+ \lambda_i H ] {\hat{\textbf{z}}}_i(t)$.
  Hence, the system dynamics \eqref{eq:gen} is transformed again into $N$ equations of the form of Eqs.\  \eqref{eq:block}, in the eigenvalues $\lambda_i$ of the matrix $A$, $i=1,..,N$.
  However, there is  an important advantage of the group theoretical approach.
  Namely, the matrix $T$ carries information on which eigenvalues are associated with motion parallel to the consensus manifold and which eigenvalues are associated with motion transverse to the manifold and for the latter ones, which ones are associated with either stability of a given cluster or stability of a given set of intertwined clusters. With this knowledge,  the set of equations \eqref{eq:block} will provide detailed information on stability of the quotient network dynamics and on stability of group consensus for given clusters of interest.
\end{remark}

\section{Conclusion}

In this paper, we have studied the group consensus problem from the perspective of graph automorphisms.
We have shown how the block diagonalizing transformation can decouple the motion along the consensus manifold from the motion orthogonal to the consensus manifold.
More importantly, the transformation splits the orthogonal motion into disjoint systems of equations corresponding to the types of orthogonal motion, which allows one to choose coupling protocols to selectively allow some clusters to reach consensus, which we call isolated group consensus.
We also show we can achieve group consensus in the absence of intra-cluster edges, different from the current methodologies for the group consensus problem \cite{yu2012group,qin2016leaderless}.

In the example presented, we show that for our particular choice of coupling protocol with two tunable control parameters, we are able to achieve group consensus whether or not the system is stable.
We also choose values for the two control parameters which allow for individual group consensus.
For other coupling protocols, the steps taken through the example can be used to find the range of tunable control parameters which allow for group consensus.
Alternatively, the block diagonalizing transformation can be used to construct Lyapunov functions for individual orthogonal components which can be tailored to the particular form of each block for other choices of coupling protocols, with potentially more tunable control parameters.

For a given network topology, this work provides the series of steps to be taken to analyze the conditions under which either complete group consensus or individual group consensus can be achieved, regardless of the stability of the entire system.

\begin{acknowledgments}
This work was supported by the National Science Foundation through grant (Grant No. 1727948) and the Office of Naval Research through ONR Award No. N00014-16-1-2637.
\end{acknowledgments}

\end{document}